\newtheorem{lemma}{\bf Lemma}
\newlength{\aligntop}
\newlength{\alignbot}
\renewenvironment{align}{%
  \vspace{\aligntop}
  \start@align\@ne\st@rredfalse\m@ne
}{%
  \math@cr \black@\totwidth@
  \egroup
  \ifingather@
    \restorealignstate@
    \egroup
    \nonumber
    \ifnum0=`{\fi\iffalse}\fi
  \else
    $$%
  \fi
  \ignorespacesafterend%
  \vspace{\alignbot}\par\noindent
}
\begin{document}
\title{\huge Single Controller Stochastic Games for\\Optimized Moving Target Defense\vspace{-0.7cm}}
\author{\authorblockN{ AbdelRahman Eldosouky$^1$, Walid Saad$^1$, and Dusit Niyato$^2$} \authorblockA{\small
$^1 $Wireless@VT, Bradley Department of Electrical and Computer Engineering, Virginia Tech, Blacksburg, VA, USA, Emails:\{iv727,walids\}@vt.edu \\
$^2$ School of Computer Engineering, Nanyang Technological University (NTU), Singapore, Email:\ {dniyato@ntu.edu.sg}\\
\vspace{-1cm}
}%
 \thanks{This research was supported by the U.S. National Science Foundation under Grants ACI-1541105, CNS-1524634, and CNS-1446621.}  
 } 
\date{}
\maketitle

\begin{abstract}
Moving target defense (MTD) techniques that enable a system to randomize its configuration to thwart prospective attacks are an effective security solution for tomorrow's wireless networks. However, there is a lack of analytical techniques that enable one to quantify the benefits and tradeoffs of MTDs. In this paper, a novel approach for implementing MTD techniques that can be used to randomize cryptographic techniques and keys in wireless networks is proposed. In particular, the problem is formulated as a stochastic game in which a base station (BS), acting as a defender seeks to strategically change its cryptographic techniques and keys in an effort to deter an attacker that is trying to eavesdrop on the data. The game is shown to exhibit a single-controller property in which only one player, the defender, controls the state of the game. For this game, the existence and properties of the Nash equilibrium are studied, in the presence of a defense cost for using MTD. Then, a practical algorithm for deriving the equilibrium MTD strategies is derived. Simulation results show that the proposed game-theoretic MTD framework can significantly improve the overall utility of the defender, while enabling effective randomization over cryptographic techniques.
\end{abstract}

\section{Introduction}

The emergence of reconfigurable wireless networks based on software defined networking and software defined radio concepts is expected to revolutionize the future of wireless communications. However, such reconfigurable systems are susceptible to many security threats that range from jamming to eavesdropping and node forgery. One effective way to thwart attacks on reconfigurable environments is via the use of \emph{moving target defense (MTD)} techniques \cite{MTD07}. MTDs are built on the premise of continuously randomizing the network's configuration (e.g., cryptographic keys, network parameters, IP addresses) so as to increase the uncertainty and cost of attack on the adversary. The effective deployment of MTDs requires meeting several challenges that range from optimizing the randomization to analyzing the costs and benefits of MTDs \cite{MTD08,MTD15,MTD12,MTD02,MTD13,MTD11,MTD10,MTD09,MTD14}.

A number of research works have recently attempted to address some of these challenges \cite{MTD07,MTD08,MTD15,MTD12}. First, in \cite{MTD07},  , the authors focus on the five dominant domains in which MTD techniques could be applied against cyber attacks in critical systems. In this work, defined these domains to be networks, platforms, runtime environments, software, and data. They studied the weakness and advantages of using MTD in these domains. In \cite{MTD08}, the authors proposed a three-layer model to evaluate the effectiveness of MTDs in software. These layers capture low-level contexts in separate programs, model damage propagation between different programs, and provide a user interface to expresses evaluation results. The work in \cite{MTD15} considers an MTD to be a subclass of system agility. In this work, system agility is defined as any reasoned modification to a system or environment in response to a functional, performance, or security need. In \cite{MTD12}, the authors propose a foundation for defining the theory of MTD. They defined key problems and hypothesis related to MTD such as the way to select the next valid configuration of the system, configuration space, and the timing problem.

The use of MTD in resource-constrained distributed devices, e.g., wireless sensor networks was studied in \cite{MTD02}. The authors proposed two different reconfigurations at different architectural layers. The first is applied at what they defined as a security layer by using a number of cryptographic techniques and each node in the network can choose its encryption method for each packet by adding a special identifier in the packet header. The second reconfiguration is to be applied at the physical layer by changing the node's firmware.
In \cite{MTD13}, the authors use MTD to defend against selective jamming attacks. This work studies the problem of isolating a subset of the network by jamming the signals sent from this sub-network. The work in \cite{MTD13} also provides practical MTD solutions such as address flipping and random address assignment.
The use of software defined networking in applying MTD was discussed in \cite{MTD11}. The authors defined a technique to MTD by assigning virtual IPs to hosts in the network beside their reals IPs. Software-defined networking was used to manage the IP translation. However, these works are mostly qualitative or experiment based and, as such, they do not address specific MTD problem formulations.

More recently, game-theoretic methods have recently attracted attention as a suitable tool for implementing MTDs \cite{MTD10,MTD09,MTD14}. In \cite{MTD09}, the authors develop a zero-sum stochastic game model to a feedback-driven multi-stage MTD. A feedback learning framework was used to implement MTD based on real-time data and observations made by the system. The purpose of the learning algorithm for the defender is to monitor its current state and update its randomized strategy based on its observation. In this model, the attacker launches a multi-stage attack and the defender responses at each layer. In \cite{MTD14}, the authors analyze a system in which the defender has a number of different platforms to run a critical application and the attacker has a set of attacks that are applicable against some of these platforms. The authors proposed two types of attackers, static and adaptive, and gave attack model to both of them. The authors in \cite{MTD15}, also suggested that MTD games should be modeled as tunable hierarchical games. The output of a game at one level should determine the level of risk associated with a game at a different level.

A recent collection of publications for applying game theory in MTD \cite{MTD10} does not provide clear approaches to concretely reach equilibrium strategies. Moreover, the works in \cite{MTD09} and \cite{MTD14} abstract many of the details of the network considered, and, thus, they cannot directly apply to practical systems. Moreover, the work in [9] assumes the presence of a highly intelligent layered attacker which may not be true in practice.

The main contribution of this paper is to develop a novel game-theoretic model for MTD that can be applied to securing a wireless network. In particular, we consider a wireless system in which a base station (BS) seeks to implement an MTD-based cryptographic approach in which it randomizes over various cryptographic keys and techniques so as to evade an eavesdropper that is trying to decrypt the messages. We formulate the problem as a single-controller non-zero-sum stochastic game in which the BS uses a number of cryptography techniques along with a number of keys for each technique. The BS can implement MTD by randomizing over various actions that include choosing an encryption method defined by specific encryption technique and key combination. We also consider a defense cost for applying MTD that depends on the number of consecutive changes in the system.
Since our model deals with resource-constrained systems, the encryption techniques should not be highly resource consuming. Therefore, we develop an approach that attempts to avoid the use of encryption techniques with long encryption keys in order to decrease the power consumption. While short-key encryption techniques are more vulnerable against attacks, MTD will allow the BS to switch between encryption techniques and so it is unlikely that the attacker will be able to reveal the key before it is changed.
For this game, it is shown that a Nash equilibrium always exists. To find this equilibrium, we propose an algorithm based on bimatrix game equilibrium defined for all possible pure stationary strategies of the original game. Simulation results show that the proposed approach will yield a higher defender's utility when compared with other schemes that randomly pick the strategies. 

The rest of the paper is organized as follows.
Section~\ref{sec:sysmodel} provides the system model, assumptions, and defines the defender's and attacker's utilities. In Section~\ref{sec:syssol}, the stochastic game is formulated and the steps of calculating equilibrium points are shown, and also a way to define cost function in MTD. Simulation results are discussed in Section~\ref{sec:sim}. Finally, conclusions are drawn in Section~\ref{sec:conclusion}.

\section{System Model and Problem Formulation}\label{sec:sysmodel}
Consider a wireless sensor network that consists of a BS and a number of wireless nodes. The network is deployed for sensing and collecting data about some phenomena in a given geographic area. Sensors will collect data and use multi-hop transmissions to forward this data to a central receiver or BS. The multiple access follows a slotted Aloha protocol.  Time is divided into slots and the time slot size equals the time required to process and send one packet. Sensor nodes are synchronized with respect to time slots. We assume that nodes are continuously working and so every time slot there will be data that must be sent to the BS.

All packets sent over the network are assumed to be decrypted using a given encryption technique and a previously shared secret key. All the nodes in the system are pre-programmed with a number of encryption techniques along with a number of encryption keys per technique, as what is typically done in sensor networks~\cite{MTD02}. The BS chooses a specific encryption technique and key by sending a specific control signal over the network including the combination it wants to use. We note that the encryption technique and key sizes should be carefully selected in order not to consume a significant amount of energy when encrypting or decrypting packets. Increasing the key size will increase the amount of consumed energy particularly during the decryption ~\cite{MTD01}. Since the BS is mostly receiving data, it spends more time decrypting packets rather than encrypting them and, thus, it will be highly affected by key size selection.

In our model, an eavesdropper is located in the communication field of the BS and it can listen to packets sent or received by the BS. As packets are encrypted, the attacker will seek to decrypt the packets it receives in order to get information. The attacker knows the encryption techniques used in the network and so it can try every possible key on the received packets until getting useful information. This technique is known as brute-force attack.

The idea of using multiple encryption techniques was introduced in~\cite{MTD02}. However, in this work, each node individually selects one of these technique to encrypt transmitted packets. The receiving node can know the used technique by a specific field in the packet header. Large encryption keys were used which require a significant amount of power to be decrypted. Nonetheless, these large keys are highly unlikely to be revealed using a brute-force attack in a reasonable time. Here, we propose to use small encryption keys to save energy and, in conjunction with that, we enable the BS to change the encryption method in a way that reduces the chance that the encryption key is revealed by the attacker. This is the main idea behind MTD. In MTD techniques, the defender aims to change the attack surface~\cite{MTD06} which represents the points that could be attacked. In this model, the encryption key represents the attack surface, and by changing the encryption method, the BS will make it harder for the eavesdropper to reveal the key and get the information from the system.

Naturally, the goals of the eavesdropper and the BS are not aligned. On the one hand, the BS wants to protect the data sent over the network by changing encryption method. On the other hand, the attacker wants to reveal the used key in order to get information. To understand the interactions between the defender and the attacker, one can use game theory to study their behavior in this MTD scenario. The problem is modeled as a game in which the attacker and the defender are the players. As the encryption method should be changed over time and depending on the attacker's actions, we must use a dynamic game.

Thus, we formulate a stochastic game $\mathit{\Xi}$ described by the tuple $\left\langle \mathcal{N}, \mathcal{S}, \mathcal{A}, \mathcal{P}, \mathcal{U}, \beta \right\rangle$ where $\mathcal{N}$ is the set of the two players: the defender $p_1$, the BS, and the attacker $p_2$, the eavesdropper. $\mathcal{S}$ is the set of game states and $\mathcal{A}$ is the set of actions defined for each player at every state. $\mathcal{P}$ is the set of transition probabilities between states. $\mathcal{U}$ is the set of utilities each player will get for a given combination of actions and state. Finally, $0 < \beta < 1$ is a discount factor.

The defender can choose to use one of the $N$ available encryption techniques or to use the current encryption technique with one of the $M$ available encryption keys predefined for this technique. Each game state is well defined by the current encryption technique and key combination. Therefore, there will be $K=N \cdot M$ states, i.e., $\mathcal{S}=\{s_1,s_2,\dots,s_{K}\}$. In each state $s \in \mathcal{S}$, each player has a set of actions $\mathcal{A}_i$. Let $\mathcal{A}_1 = \{a^1_1,a^1_2, \dots, a^1_K\}$ be the defender's actions which represent the choice of a specific technique and key combination among the available $K$ combinations. Let $\mathcal{A}_2 = \{a^2_1, \dots, a^2_N\}$  be the action set of the attacker which represents the set of techniques that the attacker is trying to decrypt.

In each state $s \in \mathcal{S}$ and for each action pair in $\mathcal{A}_1 \times \mathcal{A}_2$, there is an outcome (payoff) for each player. This outcome depends on the current state and actions taken by both players in this state. This outcome is defined by player-specific utility functions in $\mathcal{U}$.
For given actions $a^1 \in \mathcal{A}_1$ and $a^2 \in \mathcal{A}_2$, the defender's utility at state $s_i$ is given by:
\vspace{-0.05cm}
\begin{equation}
U_1(a^1,a^2,s_i) = R_1(a^2)+T_1(a^1,a^2,s_i)-P_1(s_i), 
\vspace{-0.15cm}
\end{equation}
where $R_1$ is the reward gained from protecting a packet. This reward depends on the attacker's action as the defender will obtain a  higher reward if the eavesdropper is attacking another encryption technique. $P_1$ is the power used to decrypt a packet and it depends on the technique (state). $T_1$ is the transition reward that the defender will gain from applying MTD and choosing a key-technique combination. This reward depends on the current system state, the defender's action taken at this state (which determines the next state), and attacker's action.

Similarly, the attacker's utility at state $s_i$ for given actions $a^1 \in \mathcal{A}_1$ and $a^2 \in \mathcal{A}_2$ will be given by:
\vspace{-0.05cm}
\begin{equation}
U_2(a^1,a^2,s_i) = R_2(a^1,a^2,s_i) - P_2(s_i),
\vspace{-0.15cm}
\end{equation}
where $R_2$ is the attacker's reward from examining the encryption keys for a given technique. Here, if the attacker can examine more keys, it will get closer to revealing the actual key. This reward depends on the attacker's action, current encryption technique (state), and defender's action. $P_2$ is the power used to decrypt a packet that depends also on the current technique.

Based on these rewards, the game is non-zero sum. Thus means, every player will try to maximize its reward and the sum of rewards is not zero. This stochastic game also exhibits an interesting property pertaining to the fact that the transition probabilities in $\mathcal{P}$ depend only on the actions of the defender. Moreover, when the defender selects an action at one state, the game moves to another state defined by the encryption technique and key combinations with a probability $p=1$. This type of stochastic games is known as \emph{single-controller stochastic games} ~\cite{MTD16}. 

This type of games is most suitable for MTD problems in which the defender aims at randomizing system parameters, as the goal of MTD is to change system parameters in order to harden the attacker's mission. The defender should take actions to change these parameters within a reasonable time. Single-controller stochastic games satisfy this property by allowing the defender to control the actions thus changing the game state which maps to changing system parameters in MTD.

\section{Proposed MTD Game Solution}\label{sec:syssol}

\subsection{Equilibrium Strategy Determination}

The studied game is a finite stochastic games since the number of states and the number of actions per state are finite.
Stochastic games are dynamic in the sense that the game moves between states each time step. In stochastic games we are interested in the accumulated (total) utilities of the players over time. Discounted utilities over time are typically used by summing the current utility and all the expected future utilities multiplied by a discount factor. In such cases, players are interested more in current payoffs than future ones. Each player seeks to take actions that maximize its utility given the other player's actions. When no player can improve its utility by solely changing its actions, the game is said to be at equilibrium.

For discounted stochastic games, the existence of Nash equilibrium points in stationary strategies was proven ~\cite{MTD03}. Stationary strategies are those strategies in which the actions taken at each state depend on this state only. If at each state, the player selects a specific action with probability $p=1$ then this called pure stationary strategy. If the player chooses between actions with some probabilities then it is called a \emph{mixed stationary strategy}.

In~\cite{MTD04}, the authors propose a scheme that can find a Nash equilibrium point for discounted non-zero sum single-controller stochastic games. The key idea is to form a bimatrix game (one matrix for each player). The rows and columns of each matrix represent pure stationary strategies for each player. The elements of these matrices represent the accumulated discounted utilities over all states (recursion) for every strategies pair. Then, any mixed strategy Nash equilibrium of this bimatrix game can be used to get a Nash equilibrium of the stochastic game.

Since the defender is the controller which selects actions to move the game to a specific state, time steps of the stochastic game are controlled by the defender. Assuming that the attacker has enough power, it can complete the brute-force attack in time $t_i$ for $i=1,2,\dots,N$ for each one of the encryption techniques. Then, the defender should choose the time step $t$ to take the next action as follows:
\begin{equation}
t<\min(t_i), \ \  i=1,2,\dots,N.
\end{equation}
By doing this, the defender can make sure that it takes a timely action before the attacker succeeds in revealing one of the keys.

The accumulated utility of player $i$ at state $s$ will be:
\begin{equation}\label{eq:phi}
\Phi_{i}(\boldsymbol{f},\boldsymbol{g},s)=\sum_{t=1}^{\infty} \beta^{t-1} \cdot  U_i(f(s_t),g(s_t),s_t),
\end{equation}
where $\boldsymbol{f}$ and $\boldsymbol{g}$ are the strategies adopted by the defender and attacker, respectively. The strategy specifies a vector of actions to be chosen at each of the states, e.g., $\boldsymbol{f}=[f(s_1),\dots,f(s_K)]$ for all the $K$ states. Actions $f(s_t)$ and $g(s_t)$ are the actions chosen at $s_t$, which is the state of the game at time $t$, according to strategies $\boldsymbol{f},\boldsymbol{g}$. State $s_t \in \mathcal{S}$ is determined by the defender's action at time $t-1$. The game is assumed to start at a specific state $s=s_1$. Note that the utility in (\ref{eq:phi}) is always bounded at infinity due to the fact that $0<\beta<1$.

When designing the bimatrix, the defender needs to calculate the accumulated utility when choosing each pure strategy against all of the attacker's pure strategies. The defender, as a controller, can know the next state resulting from its actions, and, thus, it sums the utilities in all states using the discount factor $\beta$. Let $\boldsymbol{X}$ be the defender's accumulated utility matrix for all defender's pure strategies' permutations and all attacker's pure strategies' permutations. We let $\boldsymbol{F}_{i\boldsymbol{.}}=[\boldsymbol{f}_1,\boldsymbol{f}_2,\dots,\boldsymbol{f}_{K^K}]$ be a matrix of all defender's pure strategies' permutation where each row represents actions in this strategy and similarly $\boldsymbol{G}_{i\boldsymbol{.}}=[\boldsymbol{g}_1,\boldsymbol{g}_2,\dots,\boldsymbol{g}_{N^K}]$ the matrix of all attacker's pure strategies' permutation. Then each element $X_{i,j}$ of $\boldsymbol{X}$ will be given by:
\begin{equation}\label{eq:X}
X_{i,j} =\sum_{\mathcal{S}}  \Phi_{1}(\boldsymbol{F}_{i\boldsymbol{.}},\boldsymbol{G}_{j\boldsymbol{.}},s), \forall i, j,
\vspace{-0.25cm}
\end{equation}
where $i=1,\cdots,K^K$ and $j=1,\cdots,N^K$. The attacker can only calculate its payoffs at time $t=1$, as the attacker cannot know in advance the actions taken at each state and hence the reward it will get in future. Similarly, let $\boldsymbol{Y}$ be the attacker's accumulated utility matrix, then each element $Y_{i,j}$ of $\boldsymbol{Y}$ will:
\begin{equation}
Y_{i,j} =\sum_{\mathcal{S}}  \Phi_{2}(\boldsymbol{F}_{i\boldsymbol{.}},\boldsymbol{G}_{j\boldsymbol{.}},s), \forall i, j, 
\vspace{-0.2cm}
\end{equation}
where $i$ and $j$ are the same as the defender's case, and $\Phi_{2}(\boldsymbol{F}_{i\cdot},\boldsymbol{G}_{j\cdot},s)$ is only evaluated at time $t=1$.

The solution of the bimatrix could be obtained by algorithms such as Lemke-Howson~\cite{MTD05}, which is proven to always terminate at a solution and hence finds a mixed Nash equilibrium of the bimatrix game. This solution is then used as in~\cite{MTD04} to find the equilibrium of the stochastic game. Let $(\boldsymbol{x}^*, \boldsymbol{y}^*)$ be any mixed strategy Nash equilibrium point for the bimatrix game $(\boldsymbol{X}, \boldsymbol{Y})$. Each $(\boldsymbol{x}^*, \boldsymbol{y}^*)$ is a vector of probabilities with which each player can choose each strategy in all the strategies permutations.

As each strategy represents the set of actions per all states, the equilibrium point to the stochastic game, i.e, the probability of choosing each strategy, can be calculated as:
\begin{align}\label{eq7}
E^*_{i,j}=   \sum_{l=1,i=F_{l,j}}^{K^K} x_l^* , \ i = 1,\dots,K, j=1,\dots,K,  \nonumber \\
H^*_{i,j}= \sum_{l=1,i=G_{l,j}}^{K^K} y_l^*, \ i=1,\dots,N, j=1,\dots,K,
\end{align}

\hspace{-0.38cm}where $x_l^* \in \boldsymbol{x}^*$ and $y_l^* \in \boldsymbol{y}^*$ are the elements of $\boldsymbol{x}^*,\boldsymbol{y}^*$ that represent strategies' probabilities.
Each element $E^*_{i,j}$ of $\boldsymbol{E}^*$ and $H^*_{i,j}$ of $\boldsymbol{H}^*$ is the probability of taking action $i$ in state $j$ for the defender and the attacker, respectively.
The summations in (\ref{eq7}) give the probabilities of one action $i$ which satisfies the condition. This is repeated for all values of $i$ to get a column which is all actions' probabilities in one state. Different values of $j$ give the rest of the states.
$\boldsymbol{E}^*$ is a $K \cdot K$ matrix that gives the probability of each of the defender's $K$ actions in each of the $K$ states. Similarly, $\boldsymbol{H}^*$ is an $N \cdot K$ matrix that gives the probability of each of the attacker's $N$ actions in each of the $K$ states. These matrices are the \emph{equilibrium strategies} for both players.


These probabilities specify the behavior of the game. The defender in each state will choose an action (selecting an encryption method) with some probability and so the game will move to another state (encryption method). Then, again in the new state, the defender chooses a new action and so on. Using this process, the defender will keep moving between encryption methods which effectively implements a highly randomized MTD.

Finally, the value (expected utility) of each player at equilibrium can be computed by applying the equilibrium strategies and finding the accumulated payoffs of both players. These expected utilities are calculated by following all the possible transitions due to defender's actions in each state. Let $v^*_i(s)$ be player's $i$ value at state $s$:
\begin{equation}\label{eq:v}
v^*_i(s) = \Phi_{i}(\boldsymbol{E}^*,\boldsymbol{H}^*,s) \ \ s \in \mathcal{S},
\end{equation}

As the players get these values at equilibrium, both players will not have an incentive to deviate from these equilibrium strategies. The player who deviates will get a lower value when the other player uses its equilibrium strategy. This can be expressed as:
\begin{align}
v^*_1(s) &\geq  \Phi_1 (\hat{\boldsymbol{E}},\boldsymbol{H}^*,s), \  s \in \mathcal{S},\nonumber \\
v^*_2(s) &\geq \Phi_2(\boldsymbol{E}^*,\hat{\boldsymbol{H}},s), \ s \in \mathcal{S},
\end{align}

\hspace{-0.38cm}for any $\hat{\boldsymbol{E}}$ and $\hat{\boldsymbol{H}}$ other than the equilibrium strategies.

\subsection{Moving Target Defense Cost}

In previous sections, the defender's utility included a reward from applying MTD which corresponds to the gain from randomizing system parameters. However, applying MTD may incur associated costs for the defender. Examples include the cost of reconfiguring the system and changing parameters. In our decryption model, the BS might not be able to change the encryption method unless it ensures that all nodes are informed by the change, which requires some propagation time. Changing the method before this time can lead to a conflict in the used method between various nodes around the BS (e.g., nearby and far away nodes). 

We model this cost as a function of the number of consecutive encryption method changes in the past time steps. Let the number of consecutive method changes during the past time steps be $n$ and the cost value be $q$. The cost function will be $C(q,n)$ and it is an increasing function in the number of consecutive changes $n$. The defender's utility can then be written as:
\begin{equation}
U_1(a^1,a^2,s_i) = R_1(a^2)+T_1(a^1,a^2,s_i)-P_1(s_i)-C(q,n).
\end{equation}
Clearly, $n$ will be zero at the first time step. The effect of this cost can appear in the accumulated utility in (\ref{eq:phi}) which will affect the matrix $\boldsymbol{X}$ in (\ref{eq:X}) and the defender's equilibrium values in (\ref{eq:v}).

We propose two different functions to express the cost. The first cost function can be expressed as $C(q,n)=q \cdot n$. We need to make sure that the game will remain finite after adding this cost function so that the same solution can be applied. As the cost affects the utility, we can state the following lemma:

\begin{lemma}\label{lem1}
The accumulated defender's utility will remain bounded after adding a cost function in the form $C(q,n)=q \cdot n$ and, thus, the game will still admit an equilibrium point.
\end{lemma}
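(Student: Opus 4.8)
The plan is to show that the infinite discounted sum defining $\Phi_1$ stays finite once the extra term $-C(q,n)=-q\cdot n$ is folded into $U_1$, and then to invoke the existence result for discounted stochastic games cited earlier to conclude that an equilibrium persists. First I would isolate the effect of the cost: by linearity of the sum in (\ref{eq:phi}), the new accumulated utility differs from the cost-free one only by the single term $-\sum_{t=1}^{\infty}\beta^{t-1}\,C(q,n_t)$, where $n_t$ is the number of consecutive method changes observed up to time $t$. Since the cost-free $\Phi_1$ is already known to be bounded (as noted after (\ref{eq:phi}), because $0<\beta<1$ and the per-stage rewards and powers are finite), it suffices to show that this one extra series is bounded.

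The key observation, and the step I expect to be the main obstacle, is pinning down the right bound on $n_t$. Because $n_t$ counts consecutive changes among the \emph{past} time steps and $n_1=0$, there can be at most $t-1$ preceding steps at time $t$, so $n_t\le t-1$. This linear-in-$t$ bound is exactly what makes the argument go through: the cost may grow without bound as $t\to\infty$, but only linearly, while it is discounted geometrically. Substituting, I would bound the extra series in absolute value by
\[
\Big|\sum_{t=1}^{\infty}\beta^{t-1}\,q\,n_t\Big|\;\le\; q\sum_{t=1}^{\infty}(t-1)\,\beta^{t-1}.
\]

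Finally I would evaluate the right-hand side with the standard identity $\sum_{m=0}^{\infty} m\,x^{m}=x/(1-x)^2$, valid for $|x|<1$, taking $x=\beta$ and $m=t-1$, which yields the finite value $q\,\beta/(1-\beta)^2$. Hence the added cost contributes a bounded amount for every $0<\beta<1$, so the modified $\Phi_1$ remains bounded. With boundedness re-established, the modified game is still a finite discounted stochastic game with bounded per-stage payoffs, and the existence of a stationary Nash equilibrium follows directly from the result for discounted stochastic games cited earlier, completing the proof. The only delicate point is justifying the $n_t\le t-1$ bound; everything after it is a routine geometric-series estimate.
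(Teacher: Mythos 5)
Your proof is correct, and it hinges on exactly the same key observation as the paper's own proof: since $n$ counts consecutive changes among \emph{past} time steps, $n \le t-1$, so the cost term grows at most linearly in $t$ while being discounted geometrically. Where you diverge is in how the conclusion is drawn, and your version is actually the more rigorous of the two. The paper, after substituting $n \le t-1$, merely computes $\lim_{t\to\infty} \beta^{t-1}\big(R_1+T_1-P_1-q\cdot(t-1)\big)=0$; but vanishing terms alone do not imply that a series converges (the harmonic series is the standard counterexample), so strictly speaking the paper's argument leaves the boundedness claim unfinished. Your explicit evaluation of the tail via $\sum_{m=0}^{\infty} m\,\beta^{m} = \beta/(1-\beta)^2$ supplies the missing step, giving the concrete bound $q\,\beta/(1-\beta)^2$ on the cost's total contribution, from which boundedness of $\Phi_1$ --- and hence, by the cited existence result for discounted stochastic games, the persistence of an equilibrium --- follows cleanly. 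In short: same decomposition, same bound on $n$, but your geometric-series estimate is the argument the paper's proof implicitly relies on and should have stated.
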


\begin{proof} We prove this lemma by rewriting the defender's accumulated utility:
\vspace{-0.5cm}

\begin{equation*}
\Phi_{1}(f,g,s)=\sum_{t=1}^{\infty} \beta^{t-1} \big( R_1(a^2)+T_1(a^1,a^2,s_i)-P_1(s_i)- q \cdot n \big).\vspace{-.1cm}
\end{equation*}
by noticing that the maximum for $n$ is $t-1$ and taking the limit as $t$ reaches $\infty$ we get
\begin{equation*}
\lim_{t\to\infty} \beta^{t-1} \big( R_1(a^2)+T_1(a^1,a^2,s_i)-P_1(s_i)- q \cdot (t-1) \big) = 0.\vspace{-0.2cm}
\end{equation*}
\vspace{-0.2cm}
\end{proof}

A second form for the cost function is $C(q,n)=q \cdot \ln(n+1)$. We choose such a logarithmic function to reduce the effect of cost propagation. Note, Logarithmic function has a smaller rate of growth compared to the linear function in the first case. We need to ensure that the game will remain finite by adding this cost function, so we state the following lemma:

\begin{lemma}\label{lem2}
The accumulated defender's utility will remain bounded after adding the cost function $C(q,n)=q \cdot \ln(n+1)$ and, thus, the game will still admit an equilibrium point.
\end{lemma}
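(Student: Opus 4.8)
The plan is to mirror the argument used for Lemma~\ref{lem1}, exploiting the fact that the logarithmic cost grows strictly more slowly than the linear cost already handled there. First I would substitute the new cost into the accumulated defender's utility and write
\begin{equation*}
\Phi_{1}(f,g,s)=\sum_{t=1}^{\infty} \beta^{t-1} \big( R_1(a^2)+T_1(a^1,a^2,s_i)-P_1(s_i)- q \cdot \ln(n+1) \big).
\end{equation*}
As in the previous proof, the key structural observation I would reuse is that the number of consecutive changes obeys $n \leq t-1$ at time step $t$, so that $\ln(n+1) \leq \ln(t)$ and the cost contributed at step $t$ is at most $q\ln(t)$.

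The cleanest route is to reduce directly to Lemma~\ref{lem1}. Since $\ln(1+x) \leq x$ for every $x \geq 0$, we have $\ln(n+1) \leq n$, and hence $q\ln(n+1) \leq q\cdot n$ term by term. Because the cost enters the utility with a negative sign, the per-step payoff under the logarithmic cost is pointwise no smaller than under the linear cost; combined with the fact that the non-cost terms $R_1+T_1-P_1$ are uniformly bounded (the game has finitely many states and actions), boundedness of $\Phi_1$ then follows immediately from the linear case already established.

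Alternatively, to keep the argument self-contained in the paper's style, I would verify two ingredients directly: that the general term vanishes, $\lim_{t\to\infty}\beta^{t-1}\big(R_1+T_1-P_1-q\ln(t)\big)=0$ (exponential decay dominating logarithmic growth), and --- the point I would be careful about --- that the full series actually converges, which a vanishing general term alone does not guarantee. I would settle this by the comparison $\ln(t)\leq t$, giving $\sum_{t=1}^{\infty}\beta^{t-1}q\ln(t)\leq q\sum_{t=1}^{\infty}t\,\beta^{t-1}=q/(1-\beta)^2<\infty$, while $\sum_{t=1}^{\infty}\beta^{t-1}(R_1+T_1-P_1)$ converges as a geometric-type series with bounded terms.

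Finally, with $\Phi_1$ shown to be finite, the game remains a finite discounted stochastic game with bounded payoffs, so the existence result for discounted stochastic games invoked earlier applies and the game still admits an equilibrium point. The only genuine subtlety is the summability check in the self-contained route; the reduction to Lemma~\ref{lem1} via $\ln(n+1)\leq n$ sidesteps even that, which is why I would lead with it.
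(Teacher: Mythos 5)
Your proposal is correct, and it is worth noting that it does more than the paper's own proof, which consists of a single sentence: the lemma is proved ``in a manner analogous to Lemma~\ref{lem1} where the limit will be zero also.'' In other words, the paper merely re-runs the Lemma~\ref{lem1} computation, whose entire content is that the general term $\beta^{t-1}\big(R_1+T_1-P_1-q\ln(n+1)\big)$ tends to zero as $t\to\infty$. Your primary route is genuinely different: rather than recomputing a limit, you dominate the new cost by the old one via $\ln(n+1)\le n$, sandwiching the per-step payoff between the linear-cost payoff (whose accumulated sum is handled by Lemma~\ref{lem1}) and the cost-free payoff (whose accumulated sum is a geometric-type series with bounded terms), so boundedness transfers with no new analysis. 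Your secondary, self-contained route is essentially the paper's argument done rigorously: you correctly point out that a vanishing general term is necessary but not sufficient for convergence of the series, and you supply the missing summability estimate $\sum_{t=1}^{\infty}\beta^{t-1}q\ln(t)\le q\sum_{t=1}^{\infty}t\,\beta^{t-1}=q/(1-\beta)^2<\infty$ together with the geometric bound on the non-cost terms. In that sense your write-up not only matches the paper's conclusion but quietly repairs a small logical gap present in both of the paper's lemma proofs, which conflate ``the summand tends to zero'' with ``the accumulated utility is bounded''; either of your two routes closes that gap, and the comparison $\ln(n+1)\le n$ does so with the least machinery.
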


\begin{proof} We prove this lemma in a manner analogous to Lemma ~\ref{lem1} where the limit will be zero also.
\vspace{-0.2cm}
\end{proof}

In general, any function could be used to represent the propagation cost when its limit is bounded at infinity.

\section{Simulation Results and Analysis}\label{sec:sim}

For our simulations, we choose a system that uses $2$ encryption techniques with $2$ different keys per technique. Thus, the number of system states are $4$ and the defender has $4$ actions in each state. For the bimatrix, the attacker has $2^4=16$ different strategy permutations and the defender has $4^4=256$ different strategy permutations. The power values are set to $1$ and $3$ to pertain to the ratio between the power consumption in the two different encryption techniques. These values are the same for both players. We set $R_1$ and $R_2$ to be $10$ and $5$ depending on the opponent's actions. We choose these values to be higher than the power values in order for the utilities to be positive. The transition reward is set to $5$ and $10$ for switching to another state defined by another key or another technique, respectively.

\begin{table}[t]
\centering
\caption{Attacker's and defender's equilibrium strategies}
\label{Tab1}
\begin{tabular}{c|c|c||c|c|c|c|}
\cline{2-7}
                         & \multicolumn{2}{c||}{Attacker} & \multicolumn{4}{c|}{Defender} \\ \cline{2-7} 
                         & $a_1 $            & $a_2$            & $a_1$    & $a_2$    & $a_3$    & $a_4$    \\ \hline
\multicolumn{1}{|c|}{$s_1$} & $0.7436$       & $0.2564$         & $0.0000$ & $0.6622$ & $0.1681$ & $0.1697$     \\ \hline
\multicolumn{1}{|c|}{$s_2$} & $0.7436$       & $0.2564$         & $0.4441$  & $0.0195$ & $0.1697$ & $0.3667$     \\ \hline
\multicolumn{1}{|c|}{$s_3$} & $0.3482$       & $0.6518$         & $0.4441$  & $0.3667$ & $0.0195$ & $0.1697$     \\ \hline
\multicolumn{1}{|c|}{$s_4$} & $0.3482$       & $0.6518$         & $0.4441$  & $0.3667$ & $0.1697$ & $0.0195$     \\ \hline
\end{tabular}
\vspace{-0.2cm}
\end{table}

First, we run simulations when there is no transition cost, $q=0$. The equilibrium strategies for both the attacker and defender are shown in Table~\ref{Tab1}. Note that actions $a_1,a_2$ represent the selection of two keys for the same encryption technique and actions $a_3,a_4$ represent two keys for another technique. Table 1 shows the probabilities over all actions for each player. These probabilities show how players should select actions in every state. For the defender, if it starts in state $s_3$ then it should move to state $s_1$ with the highest probability and move to state $s_2$ with a very similar probability. This is because the defender will change the technique and so gets a higher transition reward. We can see that the probability of moving to the same state is always very low and can reach $0$ as in state $s_1$. The probability of moving to a state that has a similar encryption key is always less than that of moving to a state with different technique as the transition reward will be lower. For the attacker, the probability of attacking the same technique that is used in the current state is always higher than attacking any other technique.

In Fig.~\ref{fig:1}, we show the effect of the discount factor on the defender's utility at equilibrium in every state. First, we can see that all utility values at all states increase as the discount factor increases. This is due to the fact that increasing the discount factor will make the defender care more about future rewards thus choosing the actions that will increase these future rewards. Fig.~\ref{fig:1} also shows that the defender's values at states $1$ and $2$ are higher than at states $3$ and $4$. This because states $1$ and $2$ adopt the first encryption technique which uses less power than the encryption technique used in states $3$ and $4$. The difference mainly arises in the first state before switching to other states and applying the discount factor. Clearly, changing the discount factor has a big effect on changing the equilibrium strategy, and, thus, the game will move between states with different probabilities resulting in a different accumulated reward.

\begin{figure}[t]
  \centerline{\includegraphics[width=7cm]{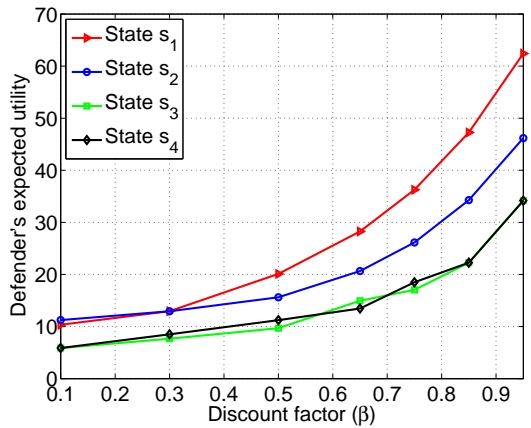}}
  \caption{The defender's expected utility in each state against discount factor $\beta$.}\label{fig:1}
    \vspace{-0.45cm}
\end{figure}

\begin{figure}[t]
  \centerline{\includegraphics[width=7cm]{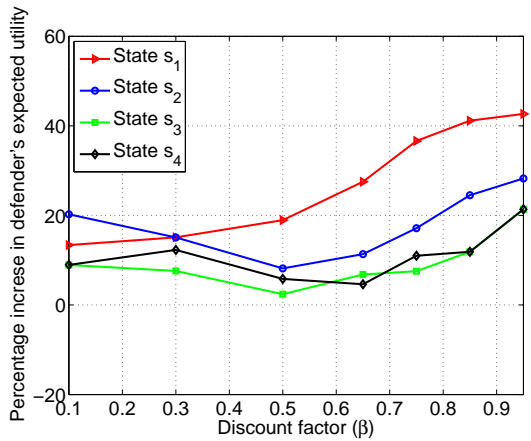}}
  \caption{Percentage increase in the defender's expected utility when using the equilibrium strategy and when using equal probabilities over actions. This is shown in each state as function of the discount factor $\beta$.}\label{fig:2}
    \vspace{-0.45cm}
\end{figure}

In Fig.~\ref{fig:2}, we study the effect of applying the proposed MTD technique against the case when the defender decides to use equal probabilities over its actions in each state, i.e., all entries equal $0.25$ as there are four actions per state. Fig.~\ref{fig:2} shows the percentage of increase in the defender's expected utility. We can see that the minimum increase is non-zero which means that the defender will not gain from deviating from equilibrium strategies. Moreover, at high discount factor values, i.e, $\beta > 0.75$, the percentage increase is higher than that at lower $\beta$ values in all states. The percentage increase ranges from $5\%$ to about $40\%$ at $\beta = 0.75$ depending on the state, and it can reach values between $20\%$ and above $40\%$ at $\beta > 0.95$. This is due to the fact that, at higher $\beta$ values, future state transitions have higher impact on calculating equilibrium strategies and the defender considers more state changes in the future. This makes equilibrium strategies differ more from equal probabilities. For other $\beta$ values, the percentage increase depends on how different the equilibrium strategy from the equal allocation scheme.

\begin{figure}[t]
  \centerline{\includegraphics[width=7cm]{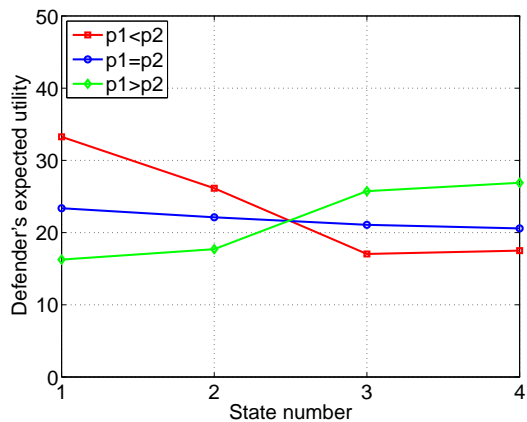}}
  \caption{The defender's expected utility in each state for different techniques power combinations.}\label{fig:3}
  \vspace{-0.45cm}
\end{figure}

In Fig.~\ref{fig:3} we study the effect of changing the power on the defender's expected utility at equilibrium. We study three cases, first when the power required for technique $1$ is less than the power required for technique $2$, similar to the previous experiments. Then, we study the cases in which they are equal and in which technique $1$ requires more power than technique $2$. Here, we set $\beta=0.75$. From Fig.~\ref{fig:3}, we can see that, when the first technique's power is less than the second one, the defender gets higher reward at states $s_1$ and $s_2$ than at states $s_3$ and $s_4$. This stems from the fact that, at states $s_1$ and $s_2$, the defender begins the game using technique $1$ (lower power) thus getting a higher reward. A similar result can be seen when the defender gets a higher reward at states $s_3$ and $s_4$ when the technique used in these states needs less power. When the two techniques use the same power, we can see that the defender's expected utility is almost the same for all states. Fig.~\ref{fig:3} clearly shows the effect of first state parameters on the expected utility.

\begin{figure}[t]
  \centerline{\includegraphics[width=7cm]{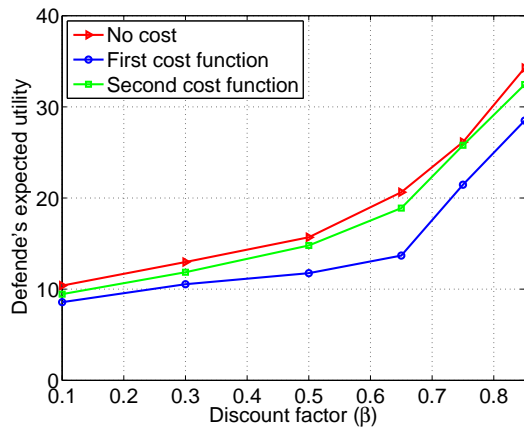}}
  \caption{The defender's expected utility in state $s_2$ against discount factor $\beta$ for different cost functions.}\label{fig:4}
    \vspace{-0.45cm}
\end{figure}

In Fig.~\ref{fig:4}, we study the effect of adding cost to the defender's expected utility. We calculate the expected utility at different discount factor values at state $s_2$. Clearly, the expected utility is higher when no cost is applied. When applying cost function $C(q,n)=q \cdot \ln(n+1)$, the utility is barely reduced. When applying the cost function $C(q,n)=q \cdot n$, we notice a significant decrease in the defender's expected utility. In our problem, the cost function $C(q,n)=q \cdot n$ will be more suitable as the other cost function does not show a significant change.

\section{Conclusions}\label{sec:conclusion}

In this paper, we have studied the use of MTD in a wireless network security problem. We have formulated the problem using a non-zero sum stochastic game theory model in which the defender controls state transition. The next state is determined only by defender's actions which is suitable for MTD cases where the defender want to change system parameter's before the attacker can reveal them. This property of the game ensures that the game will always have an equilibrium point. We have provided the mathematical model for deriving an equilibrium in such games. We then provided a novel way to define cost in MTD systems that depends on the number of consecutive changes in system parameters. We have shown two different functions to define cost and have proved that the game will still have equilibrium. Simulation results have shown that this model helps the defender to get higher expected utility in all system state than the case of assigning equal probabilities over different actions.

\def\baselinestretch{0.8}
\bibliographystyle{IEEEtran}
\bibliography{references}

\end{document}